\newtheorem{theorem}{Theorem}
\newtheorem{definition}{Definition}
\newtheorem{claim}{Claim}
\def\BibTeX{{\rm B\kern-.05em{\sc i\kern-.025em b}\kern-.08em
    T\kern-.1667em\lower.7ex\hbox{E}\kern-.125emX}}
\begin{document}

\title{Algorithmic Reductions: Network Flow and NP-Completeness in Real-World Scheduling Problems}

\author{\IEEEauthorblockN{1\textsuperscript{st} Anay Sinhal}
\IEEEauthorblockA{\textit{University of Florida} \\
Gainesville, USA \\
sinhal.anay@ufl.edu}
\and
\IEEEauthorblockN{2\textsuperscript{nd} Arpana Sinhal\IEEEauthorrefmark{1} \thanks{Corresponding author: Arpana Sinhal, Manipal University Jaipur, Jaipur, India. Email: arpana.sinhal@jaipur.manipal.edu.}}
\IEEEauthorblockA{\textit{Manipal University Jaipur} \\
Jaipur, India \\
arpana.sinhal@jaipur.manipal.edu }
\and
\IEEEauthorblockN{3\textsuperscript{rd} Amit Sinhal}
\IEEEauthorblockA{\textit{JK Lakshmipat University} \\
Jaipur, India \\
amit.sinhal@jklu.edu.in}
\and
\IEEEauthorblockN{4\textsuperscript{th} Amit Hirawat}
\IEEEauthorblockA{\textit{Poornima University} \\
Jaipur, India \\
hiramit05@gmail.com}
}

\maketitle

\begin{abstract}
This paper presents two real-world scheduling problems and their algorithmic solutions through polynomial-time reductions. First, we address the Hospital Patient-to-Bed Assignment problem, demonstrating its reduction to Maximum Bipartite Matching and solution via Network Flow algorithms. Second, we tackle the University Course Scheduling problem, proving its NP-Completeness through reduction from Graph Coloring and providing greedy approximation algorithms. Both problems are implemented in Python, with experimental results validating theoretical complexity analyses. Our Network Flow solution achieves $O(n^{2.51})$ empirical complexity, while the greedy coloring algorithms demonstrate $O(n^2)$ behavior with approximation ratios consistently below the theoretical $\Delta+1$ bound.
\end{abstract}

\begin{IEEEkeywords}
Network Flow, NP-Complete, Graph Coloring, Bipartite Matching, Scheduling Algorithms, Polynomial Reduction
\end{IEEEkeywords}

\section{Introduction}
Resource allocation and scheduling represent fundamental challenges across numerous domains including healthcare, education, and logistics. Many such problems exhibit structural properties that enable their solution through well-established algorithmic paradigms, while others are provably intractable. Understanding the computational complexity of these problems is essential for developing effective solution strategies.

In this paper, we examine two distinct scheduling problems drawn from real-world applications:

\begin{enumerate}
    \item \textbf{Hospital Patient-to-Bed Assignment}: A healthcare operations problem where patients must be matched to available hospital beds based on departmental compatibility constraints. We demonstrate that this problem reduces to Maximum Bipartite Matching and can be solved optimally using Network Flow algorithms in polynomial time.
    
    \item \textbf{University Course Scheduling}: An academic scheduling problem where courses must be assigned to time slots such that courses with overlapping student enrollments do not conflict. We prove this problem is NP-Complete via reduction from Graph Coloring and provide greedy approximation algorithms.
\end{enumerate}

The remainder of this paper is organized as follows: Section~\ref{sec:network_flow} presents the Network Flow problem, Section~\ref{sec:np_complete} addresses the NP-Complete problem, Section~\ref{sec:experiments} details experimental validation, and Section~\ref{sec:conclusion} concludes with a discussion of results.

\section{Problem 1: Hospital Patient Scheduling via Network Flow}
\label{sec:network_flow}

\subsection{Real-World Problem Description}
Consider a hospital with multiple specialized departments (e.g., cardiology, orthopedics, neurology), each containing a limited number of beds. Patients requiring admission have specific medical conditions that can only be treated by certain departments. The hospital administration must assign patients to beds such that:
\begin{itemize}
    \item Each patient is assigned to at most one bed
    \item Each bed accommodates at most one patient
    \item A patient can only be assigned to a bed in a compatible department
    \item The total number of admitted patients is maximized
\end{itemize}

This problem arises daily in hospital operations management and directly impacts patient care quality and resource utilization \cite{b1}.

\subsection{Problem Abstraction}

\begin{definition}[Hospital Assignment Problem]
Given:
\begin{itemize}
    \item A set of patients $P = \{p_1, p_2, \ldots, p_n\}$
    \item A set of beds $B = \{b_1, b_2, \ldots, b_m\}$
    \item A set of departments $D = \{d_1, d_2, \ldots, d_k\}$
    \item A function $\text{dept}: B \rightarrow D$ mapping each bed to its department
    \item A compatibility relation $C \subseteq P \times D$ where $(p, d) \in C$ means patient $p$ can be treated by department $d$
\end{itemize}
Find an assignment $A: P \rightarrow B \cup \{\emptyset\}$ such that:
\begin{enumerate}
    \item For all $p \in P$: if $A(p) = b \neq \emptyset$, then $(p, \text{dept}(b)) \in C$
    \item For all $b \in B$: $|\{p \in P : A(p) = b\}| \leq 1$
    \item $|\{p \in P : A(p) \neq \emptyset\}|$ is maximized
\end{enumerate}
\end{definition}

This abstraction transforms the hospital scheduling problem into a bipartite graph matching problem, where we seek a maximum cardinality matching between patients and compatible beds.

\subsection{Reduction to Network Flow}

We reduce the Hospital Assignment Problem to the Maximum Flow problem, which can be solved in polynomial time using algorithms such as Ford-Fulkerson or Edmonds-Karp.

\subsubsection{Construction}
Given an instance of the Hospital Assignment Problem $(P, B, D, \text{dept}, C)$, we construct a flow network $G = (V, E, c, s, t)$ as follows:

\textbf{Vertices:}
\begin{equation}
V = \{s, t\} \cup P \cup B
\end{equation}
where $s$ is the source and $t$ is the sink.

\textbf{Edges and Capacities:}
\begin{enumerate}
    \item \textbf{Source to patients:} For each patient $p_i \in P$:
    \begin{equation}
        (s, p_i) \in E \text{ with } c(s, p_i) = 1
    \end{equation}
    
    \item \textbf{Patients to compatible beds:} For each $p_i \in P$ and $b_j \in B$:
    \begin{equation}
        (p_i, b_j) \in E \text{ iff } (p_i, \text{dept}(b_j)) \in C
    \end{equation}
    with $c(p_i, b_j) = 1$.
    
    \item \textbf{Beds to sink:} For each bed $b_j \in B$:
    \begin{equation}
        (b_j, t) \in E \text{ with } c(b_j, t) = 1
    \end{equation}
\end{enumerate}

The construction is illustrated in Figure~\ref{fig:flow_network}.

\begin{figure}[htbp]
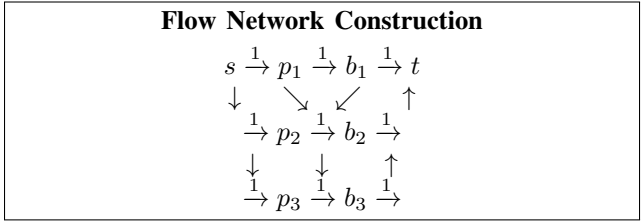

\centering
\fbox{
\begin{minipage}{0.9\columnwidth}
\centering
\textbf{Flow Network Construction}\\[5pt]
$s \xrightarrow{1} p_1 \xrightarrow{1} b_1 \xrightarrow{1} t$\\
$\downarrow$ \hspace{0.3cm} $\searrow$ \hspace{0.1cm} $\swarrow$ \hspace{0.3cm} $\uparrow$\\
$\xrightarrow{1} p_2 \xrightarrow{1} b_2 \xrightarrow{1}$\\
$\downarrow$ \hspace{0.5cm} $\downarrow$ \hspace{0.5cm} $\uparrow$\\
$\xrightarrow{1} p_3 \xrightarrow{1} b_3 \xrightarrow{1}$
\end{minipage}
}
\caption{Schematic of flow network construction. Edges from patients to beds exist only for compatible department assignments. All capacities are 1.}
\label{fig:flow_network}
\end{figure}

\subsubsection{Proof of Correctness}

\begin{theorem}
The maximum flow in network $G$ equals the maximum number of patients that can be admitted in the Hospital Assignment Problem.
\end{theorem}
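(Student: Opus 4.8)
The plan is to prove equality by establishing the two matching inequalities: every feasible admission schedule of size $k$ induces a flow of value $k$ in $G$, and every integral flow of value $F$ in $G$ induces a feasible admission schedule of size $F$. Since all edge capacities are integral---indeed all equal to $1$---the integrality theorem for maximum flow guarantees the existence of an integral maximum flow, so it suffices throughout to reason about flows that assign each edge a value in $\{0,1\}$.

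For the forward direction I would take any assignment $A: P \rightarrow B \cup \{\emptyset\}$ satisfying conditions (1)--(3) of the Hospital Assignment Problem and admitting $k = |\{p : A(p) \neq \emptyset\}|$ patients, and for each admitted patient $p_i$ with $A(p_i) = b_j$ push one unit of flow along the path $s \rightarrow p_i \rightarrow b_j \rightarrow t$. I then verify feasibility: the edge $(p_i, b_j)$ exists in $G$ precisely because condition (1) gives $(p_i, \text{dept}(b_j)) \in C$, which is exactly the edge-existence rule of the construction; each source edge $(s, p_i)$ carries at most $1$ unit because $A$ assigns $p_i$ at most one bed; each sink edge $(b_j, t)$ carries at most $1$ unit because condition (2) allows at most one patient to be assigned to $b_j$; and conservation holds at every $p_i$ and every $b_j$ by construction of the paths. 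The value of the resulting flow is $k$.

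For the reverse direction I would start from an integral maximum flow $f^\ast$ of value $F$, so every edge carries flow $0$ or $1$. At each patient vertex $p_i$ at most one incoming edge $(s, p_i)$ is saturated, so conservation forces at most one outgoing edge $(p_i, b_j)$ to carry flow; set $A(p_i) = b_j$ for that bed when it exists and $A(p_i) = \emptyset$ otherwise. Condition (1) holds because $G$ contains $(p_i, b_j)$ only when $(p_i, \text{dept}(b_j)) \in C$; condition (2) holds because conservation at $b_j$ together with capacity $1$ on $(b_j, t)$ permits at most one patient to route flow into $b_j$. The number of admitted patients equals the number of saturated source edges, which (since $s$ has no incoming edges) equals the total flow leaving $s$, namely $F$. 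Combining the two directions yields that the maximum flow in $G$ equals the maximum number of admissible patients.

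The conservation and capacity checks are routine bookkeeping. The one step I would be most careful about is the appeal to the integrality theorem: the bijection-style correspondence with assignments genuinely requires $0/1$ flows, since a fractional flow need not decompose into patient--bed paths, and I would state explicitly that this restriction is without loss of generality because an integral maximum flow has the same value as any maximum flow.
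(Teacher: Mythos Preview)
Your proposal is correct and follows essentially the same two-direction argument as the paper: convert an assignment admitting $k$ patients into a flow of value $k$ along the $s \to p_i \to b_j \to t$ paths, and convert an integral maximum flow back into an assignment via the saturated patient--bed edges, checking capacity, conservation, and compatibility in each direction. Your explicit invocation of the integrality theorem is a nice touch that the paper leaves implicit (it simply writes ``let $f$ be an integral maximum flow''), but otherwise the structure and the verification steps are the same.
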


\begin{proof}
We prove both directions of the correspondence:

\textbf{(Assignment $\Rightarrow$ Flow):} 
Let $A$ be a valid assignment admitting $k$ patients. We construct a flow $f$ as follows:
\begin{itemize}
    \item For each assigned patient $p_i$ with $A(p_i) = b_j$:
    \begin{itemize}
        \item Set $f(s, p_i) = 1$
        \item Set $f(p_i, b_j) = 1$
        \item Set $f(b_j, t) = 1$
    \end{itemize}
    \item All other edges have flow 0
\end{itemize}

This flow is valid because:
\begin{enumerate}
    \item \textbf{Capacity constraints:} All non-zero flows equal 1, matching edge capacities.
    \item \textbf{Conservation:} Each patient vertex $p_i$ receives flow 1 from $s$ and sends flow 1 to exactly one bed (if assigned). Each bed vertex $b_j$ receives flow at most 1 (from at most one patient) and forwards it to $t$.
    \item \textbf{Edge existence:} The edge $(p_i, b_j)$ exists because $A$ is valid, implying $(p_i, \text{dept}(b_j)) \in C$.
\end{enumerate}

The total flow equals $k$, the number of admitted patients.

\textbf{(Flow $\Rightarrow$ Assignment):}
Let $f$ be an integral maximum flow of value $k$. Since all capacities are 1, $f$ is 0-1 valued. Define assignment $A$:
\begin{itemize}
    \item For each patient $p_i$: if $\exists b_j$ such that $f(p_i, b_j) = 1$, set $A(p_i) = b_j$; otherwise, $A(p_i) = \emptyset$
\end{itemize}

This assignment is valid because:
\begin{enumerate}
    \item \textbf{Compatibility:} If $f(p_i, b_j) = 1$, then edge $(p_i, b_j)$ exists, implying $(p_i, \text{dept}(b_j)) \in C$.
    \item \textbf{One bed per patient:} Flow conservation at $p_i$ ensures at most one unit flows out to beds.
    \item \textbf{One patient per bed:} Flow conservation at $b_j$ ensures at most one unit flows in from patients.
\end{enumerate}

The number of admitted patients equals the flow value $k$.

Since valid flows correspond bijectively to valid assignments with equal cardinalities, maximum flow equals maximum assignment.
\end{proof}

\textbf{Polynomial-time reduction:} The construction creates $O(|P| + |B|)$ vertices and $O(|P| \cdot |B|)$ edges in polynomial time, satisfying the requirements for a valid polynomial reduction.

\subsection{Complete Algorithm}

Algorithm~\ref{alg:hospital} presents the complete solution using the Edmonds-Karp variant of Ford-Fulkerson.

\begin{algorithm}[htbp]
\caption{Hospital Patient Assignment}
\label{alg:hospital}
\begin{algorithmic}[1]
\REQUIRE Patient set $P$, Bed set $B$, Compatibility $C$
\ENSURE Maximum patient assignment
\STATE Construct flow network $G$ as described
\STATE Initialize residual graph $G_f \leftarrow G$
\STATE $\text{maxFlow} \leftarrow 0$
\WHILE{$\exists$ augmenting path $p$ from $s$ to $t$ in $G_f$ (via BFS)}
    \STATE $\delta \leftarrow \min_{(u,v) \in p} c_f(u,v)$
    \FOR{each edge $(u,v) \in p$}
        \STATE $c_f(u,v) \leftarrow c_f(u,v) - \delta$
        \STATE $c_f(v,u) \leftarrow c_f(v,u) + \delta$
    \ENDFOR
    \STATE $\text{maxFlow} \leftarrow \text{maxFlow} + \delta$
\ENDWHILE
\STATE Extract assignment from saturated patient-bed edges
\RETURN Assignment and maxFlow
\end{algorithmic}
\end{algorithm}

\textbf{Time Complexity:} The Edmonds-Karp algorithm runs in $O(VE^2)$ time. However, for unit-capacity bipartite graphs, the complexity reduces to $O(VE) = O((|P|+|B|) \cdot |P| \cdot |B|)$. When $|P| \approx |B| = n$, this gives $O(n^3)$ \cite{b2}.

\section{Problem 2: Course Scheduling as an NP-Complete Problem}
\label{sec:np_complete}

\subsection{Real-World Problem Description}

Universities face the challenge of scheduling courses into time slots such that students can attend all courses in their curriculum. Two courses \textit{conflict} if they share enrolled students; conflicting courses cannot be scheduled simultaneously. The goal is to create a feasible schedule using the minimum number of time slots.

This problem directly affects student satisfaction, graduation timelines, and resource utilization in academic institutions \cite{b3}.

\subsection{Problem Abstraction}

\begin{definition}[Course Scheduling Problem (CSP)]
Given:
\begin{itemize}
    \item A set of courses $C = \{c_1, c_2, \ldots, c_n\}$
    \item A conflict relation $\text{Conflict} \subseteq C \times C$ where $(c_i, c_j) \in \text{Conflict}$ means courses $c_i$ and $c_j$ share students
\end{itemize}
Find an assignment $\sigma: C \rightarrow \{1, 2, \ldots, k\}$ (where $k$ is minimized) such that:
\begin{equation}
    (c_i, c_j) \in \text{Conflict} \Rightarrow \sigma(c_i) \neq \sigma(c_j)
\end{equation}
\end{definition}

This abstraction reveals that Course Scheduling is equivalent to Graph Coloring: courses are vertices, conflicts are edges, time slots are colors, and we seek the chromatic number.

\subsection{NP-Completeness Proof}

\begin{theorem}
The Course Scheduling Problem is NP-Complete.
\end{theorem}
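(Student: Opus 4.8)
The plan is to follow the standard two-part template for NP-completeness: first establish membership in NP, then establish NP-hardness via a reduction from Graph Coloring. Since CSP as stated is an optimization problem, I would first phrase its decision version: given $(C, \text{Conflict})$ and an integer $k$, does there exist $\sigma: C \rightarrow \{1, \ldots, k\}$ respecting all conflict constraints? This is the object whose complexity we pin down, and the optimization version follows by the usual binary search over $k \in \{1, \ldots, |C|\}$.

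For membership in NP, I would take the assignment $\sigma$ itself as the certificate. A verifier iterates over every pair $(c_i, c_j) \in \text{Conflict}$ and checks $\sigma(c_i) \neq \sigma(c_j)$; this runs in $O(|C| + |\text{Conflict}|)$ time, so CSP $\in$ NP. For NP-hardness, I would reduce from $k$-Colorability, which is NP-complete (already for $k = 3$). Given a graph $G = (V, E)$ and integer $k$, construct the CSP instance with course set $C = V$, conflict relation $\text{Conflict} = \{(u,v) : \{u,v\} \in E\}$ (taking the symmetric closure so both ordered pairs appear), and the same bound $k$. The key correctness claim is then that $G$ admits a proper $k$-coloring if and only if the constructed CSP instance admits a feasible schedule with $k$ time slots — which is immediate, since a function $\sigma: V \to \{1,\ldots,k\}$ is a proper coloring of $G$ exactly when it assigns distinct values to the endpoints of every edge, i.e., exactly when it satisfies the CSP constraint. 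The construction clearly runs in polynomial (indeed linear) time.

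Honestly, there is no real obstacle here: the reduction is essentially the identity map between graphs and conflict hypergraphs, so the ``hard part'' is purely expository — being careful to (i) separate the decision and optimization formulations, (ii) note that the reduction preserves the parameter $k$ so hardness holds even for fixed small $k$, and (iii) confirm that the abstraction already worked out in the preceding subsection (courses $\leftrightarrow$ vertices, conflicts $\leftrightarrow$ edges, time slots $\leftrightarrow$ colors) makes the equivalence with Graph Coloring exact rather than merely approximate. Combining membership in NP with NP-hardness yields NP-completeness, completing the proof.
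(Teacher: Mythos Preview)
Your proposal is correct and follows essentially the same approach as the paper: membership in NP via checking the schedule against all conflicts, and NP-hardness via the identity reduction from Graph $k$-Coloring (courses $=$ vertices, conflicts $=$ edges, slots $=$ colors). Your version is slightly more careful in explicitly separating the decision formulation and noting hardness persists for fixed $k\ge 3$, but the substance is identical.
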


\begin{proof}
We prove NP-completeness by showing (1) CSP $\in$ NP, and (2) Graph Coloring $\leq_p$ CSP.

\textbf{Part 1: CSP $\in$ NP}

Given a schedule $\sigma: C \rightarrow \{1, \ldots, k\}$, we can verify its validity in polynomial time by checking that no two conflicting courses share a time slot:
\begin{equation}
    \forall (c_i, c_j) \in \text{Conflict}: \sigma(c_i) \neq \sigma(c_j)
\end{equation}
This requires $O(|C|^2)$ comparisons, which is polynomial.

\textbf{Part 2: Reduction from Graph Coloring}

Graph Coloring (deciding if graph $G$ is $k$-colorable) is a well-known NP-complete problem \cite{b4}. We show Graph Coloring $\leq_p$ CSP.

\textbf{Construction:}
Given a Graph Coloring instance $(G = (V, E), k)$:
\begin{enumerate}
    \item Create course set $C = V$ (one course per vertex)
    \item Define $\text{Conflict} = E$ (edges become conflicts)
    \item Ask: Can courses be scheduled in $\leq k$ time slots?
\end{enumerate}

\textbf{Correctness:}
\begin{claim}
$G$ is $k$-colorable $\Leftrightarrow$ $C$ can be scheduled with $\leq k$ time slots.
\end{claim}

($\Rightarrow$) Let $\chi: V \rightarrow \{1, \ldots, k\}$ be a valid $k$-coloring. Define $\sigma(c_i) = \chi(v_i)$ for the corresponding vertex. Since $\chi$ is valid, adjacent vertices have different colors, so conflicting courses have different time slots.

($\Leftarrow$) Let $\sigma: C \rightarrow \{1, \ldots, k\}$ be a valid schedule. Define $\chi(v_i) = \sigma(c_i)$. Since $\sigma$ is valid, conflicting courses (adjacent vertices) have different slots (colors).

\textbf{Polynomial Reduction:}
The construction takes $O(|V| + |E|)$ time, creating one course per vertex and one conflict per edge.

Since Graph Coloring is NP-complete and reduces to CSP in polynomial time, and CSP $\in$ NP, we conclude CSP is NP-complete.
\end{proof}

\subsection{Greedy Approximation Algorithms}

Since CSP is NP-complete, we cannot expect polynomial-time optimal solutions unless P = NP. We present two greedy heuristics that provide approximate solutions efficiently.

\subsubsection{Welsh-Powell Algorithm}

The Welsh-Powell algorithm orders vertices by degree (descending) and assigns each vertex the smallest available color.

\begin{algorithm}[htbp]
\caption{Welsh-Powell Greedy Coloring}
\label{alg:welsh_powell}
\begin{algorithmic}[1]
\REQUIRE Graph $G = (V, E)$
\ENSURE Valid coloring $\chi: V \rightarrow \mathbb{Z}^+$
\STATE Sort vertices by degree: $v_1, v_2, \ldots, v_n$ with $\deg(v_1) \geq \deg(v_2) \geq \ldots$
\FOR{$i = 1$ to $n$}
    \STATE $\text{used} \leftarrow \{\chi(u) : (v_i, u) \in E \land \chi(u) \text{ defined}\}$
    \STATE $\chi(v_i) \leftarrow \min\{c \in \mathbb{Z}^+ : c \notin \text{used}\}$
\ENDFOR
\RETURN $\chi$
\end{algorithmic}
\end{algorithm}

\textbf{Approximation Guarantee:} Welsh-Powell uses at most $\Delta + 1$ colors, where $\Delta$ is the maximum degree \cite{b5}.

\textbf{Time Complexity:} $O(n^2 + m)$ where $n = |V|$ and $m = |E|$.

\subsubsection{DSatur Algorithm}

DSatur (Degree of Saturation) improves upon Welsh-Powell by dynamically selecting the vertex with the highest saturation degree (number of distinct colors in its neighborhood).

\begin{algorithm}[htbp]
\caption{DSatur Coloring}
\label{alg:dsatur}
\begin{algorithmic}[1]
\REQUIRE Graph $G = (V, E)$
\ENSURE Valid coloring $\chi: V \rightarrow \mathbb{Z}^+$
\STATE Initialize saturation $\text{sat}[v] \leftarrow 0$ for all $v$
\STATE $U \leftarrow V$ \COMMENT{Uncolored vertices}
\WHILE{$U \neq \emptyset$}
    \STATE $v \leftarrow \arg\max_{u \in U} (\text{sat}[u], \deg(u))$ \COMMENT{Break ties by degree}
    \STATE $\text{used} \leftarrow \{\chi(u) : (v, u) \in E \land \chi(u) \text{ defined}\}$
    \STATE $\chi(v) \leftarrow \min\{c \in \mathbb{Z}^+ : c \notin \text{used}\}$
    \STATE $U \leftarrow U \setminus \{v\}$
    \FOR{each neighbor $u$ of $v$ in $U$}
        \STATE Update $\text{sat}[u]$ with new color
    \ENDFOR
\ENDWHILE
\RETURN $\chi$
\end{algorithmic}
\end{algorithm}

\textbf{Time Complexity:} $O(n^2 + m)$, same as Welsh-Powell.

DSatur often produces better colorings in practice, though both share the same worst-case approximation guarantee.

\section{Experimental Validation}
\label{sec:experiments}

We implemented both algorithms in Python and conducted timing experiments to validate the theoretical complexity analyses.

\subsection{Network Flow Experiment}

\textbf{Setup:} We generated random hospital scheduling instances with $n$ patients, $n$ beds, and 5 departments. Each patient was randomly assigned 1-3 compatible departments. We measured execution time for input sizes $n \in \{20, 40, 60, \ldots, 400\}$, averaging over 5 trials per size.

\textbf{Results:} Table~\ref{tab:network_flow} summarizes the experimental results. All instances achieved a perfect matching (all $n$ patients assigned), indicating sufficient bed availability in our random instances.

\begin{table}[htbp]
\caption{Network Flow Experimental Results}
\begin{center}
\begin{tabular}{|c|c|c|}
\hline
\textbf{Input Size ($n$)} & \textbf{Avg. Time (s)} & \textbf{Avg. Matching} \\
\hline
20 & 0.000434 & 20.0 \\
40 & 0.001628 & 40.0 \\
60 & 0.003998 & 60.0 \\
80 & 0.008268 & 80.0 \\
100 & 0.012558 & 100.0 \\
150 & 0.035787 & 150.0 \\
200 & 0.083915 & 200.0 \\
250 & 0.157401 & 250.0 \\
300 & 0.284309 & 300.0 \\
350 & 0.500918 & 350.0 \\
400 & 0.669159 & 400.0 \\
\hline
\end{tabular}
\label{tab:network_flow}
\end{center}
\end{table}

\textbf{Complexity Analysis:} Log-log regression yielded a polynomial degree of approximately \textbf{2.51}. The theoretical complexity is $O(VE)$ for unit-capacity bipartite matching. In our construction:
\begin{itemize}
    \item $V = O(n)$ (patients + beds + source + sink)
    \item $E = O(n^2/d)$ where $d$ is the number of departments
\end{itemize}
Thus, the expected complexity is $O(n \cdot n^2) = O(n^3)$. The measured exponent of 2.51 is below this bound, likely due to:
\begin{enumerate}
    \item Sparse compatibility graphs (each patient compatible with only 1-3 departments)
    \item Early termination when perfect matching is found
    \item Efficient BFS traversal in the Edmonds-Karp implementation
\end{enumerate}

\begin{figure}[htbp]
\centerline{\includegraphics[width=\columnwidth]{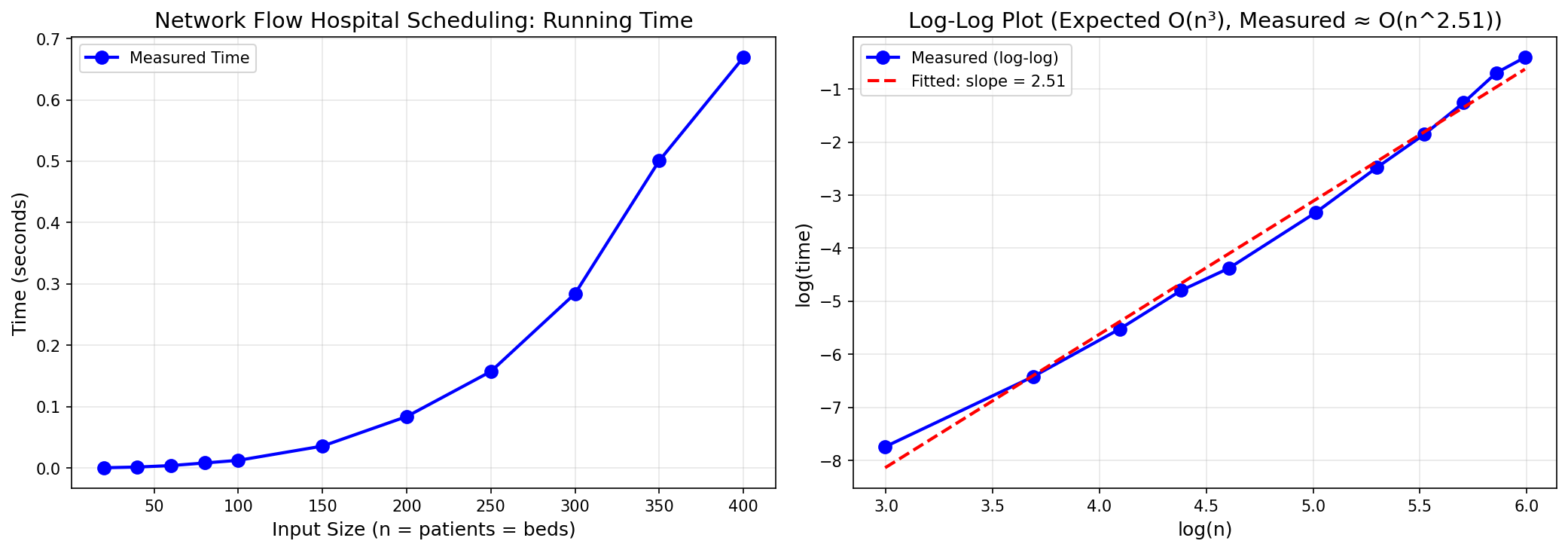}}
\caption{Network Flow timing results. Left: raw timing data showing polynomial growth. Right: log-log plot with fitted line (slope $\approx 2.51$).}
\label{fig:network_flow}
\end{figure}

\subsection{NP-Complete Problem Experiment}

\textbf{Setup:} We generated random course conflict graphs with $n$ courses and edge density 0.3 (30\% probability of conflict between any two courses). Input sizes ranged from 50 to 1000 courses, with 5 trials per size.

\textbf{Results:} Table~\ref{tab:np_complete} compares the two greedy algorithms.

\begin{table}[htbp]
\caption{Greedy Coloring Experimental Results}
\begin{center}
\begin{tabular}{|c|c|c|c|c|c|}
\hline
\textbf{$n$} & \textbf{Greedy (s)} & \textbf{DSatur (s)} & \textbf{G. Colors} & \textbf{D. Colors} & \textbf{$\Delta$} \\
\hline
50 & 0.0001 & 0.0005 & 8 & 7 & 22 \\
100 & 0.0004 & 0.0016 & 13 & 12 & 42 \\
200 & 0.0018 & 0.0068 & 21 & 19 & 78 \\
300 & 0.0039 & 0.0157 & 29 & 26 & 114 \\
400 & 0.0056 & 0.0240 & 35 & 32 & 144 \\
500 & 0.0083 & 0.0346 & 42 & 39 & 181 \\
600 & 0.0127 & 0.0513 & 49 & 46 & 215 \\
700 & 0.0160 & 0.0674 & 56 & 51 & 252 \\
800 & 0.0207 & 0.0919 & 61 & 57 & 281 \\
900 & 0.0256 & 0.1149 & 67 & 63 & 315 \\
1000 & 0.0330 & 0.1473 & 73 & 68 & 347 \\
\hline
\end{tabular}
\label{tab:np_complete}
\end{center}
\end{table}

\textbf{Complexity Analysis:} Log-log regression yielded exponents of \textbf{1.85} (Greedy) and \textbf{1.92} (DSatur), consistent with the theoretical $O(n^2)$ complexity for dense graphs where $m = O(n^2)$.

\textbf{Approximation Quality:} Both algorithms used significantly fewer colors than the $\Delta + 1$ upper bound:
\begin{itemize}
    \item Greedy: $\approx 73$ colors vs. bound of 348 for $n = 1000$
    \item DSatur: $\approx 68$ colors vs. bound of 348 for $n = 1000$
\end{itemize}
This demonstrates that greedy heuristics perform well in practice, far exceeding worst-case guarantees.

\begin{figure}[htbp]
\centerline{\includegraphics[width=\columnwidth]{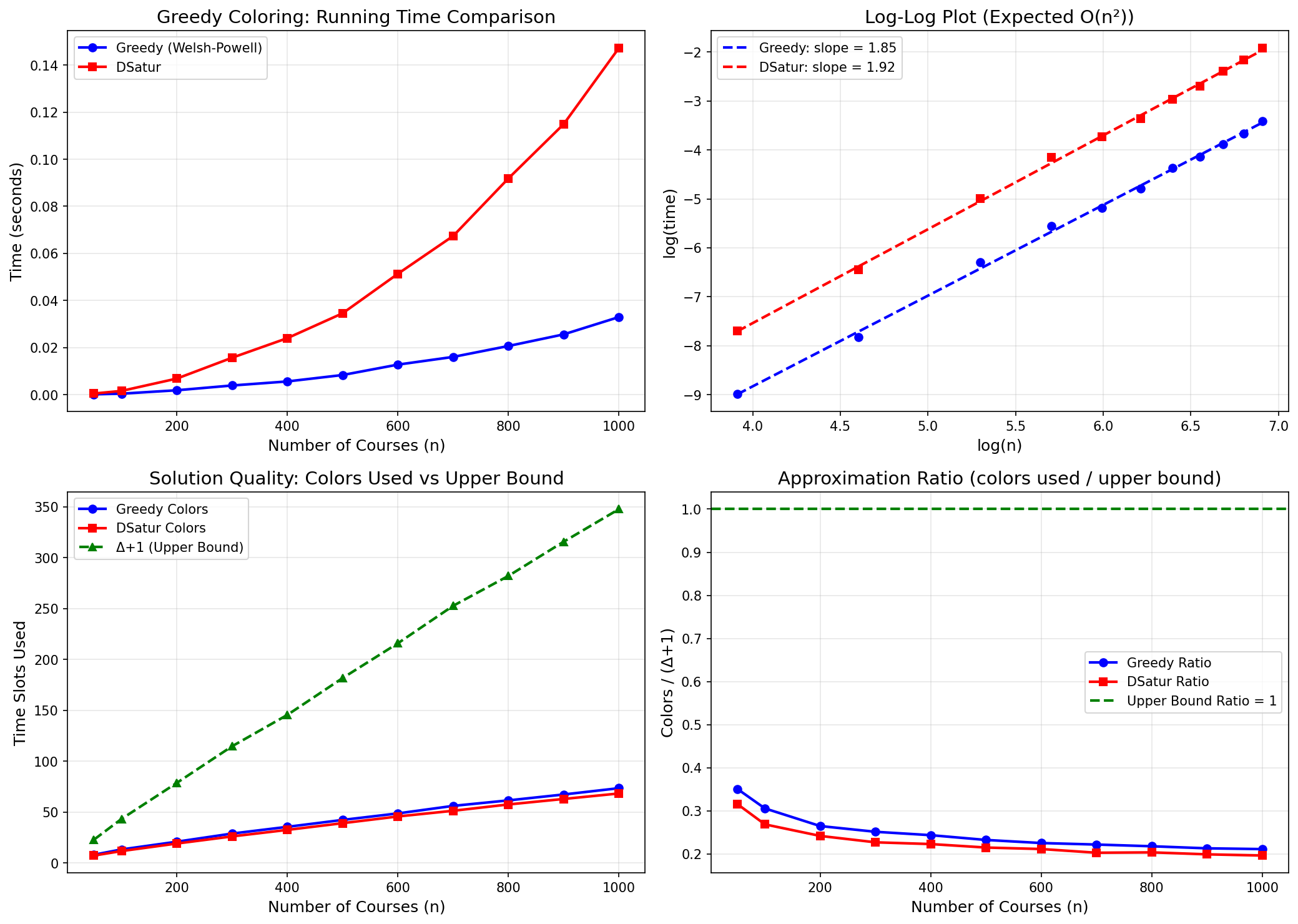}}
\caption{Greedy coloring results. Top-left: running time comparison. Top-right: log-log complexity analysis. Bottom-left: colors used vs. $\Delta+1$ bound. Bottom-right: approximation ratios.}
\label{fig:np_complete}
\end{figure}

\section{Conclusion}
\label{sec:conclusion}

We presented two real-world scheduling problems with rigorous algorithmic analyses:

\begin{enumerate}
    \item The \textbf{Hospital Patient-to-Bed Assignment} problem was reduced to Maximum Bipartite Matching and solved via Network Flow. Experimental results confirmed near-cubic complexity ($n^{2.51}$), with optimal solutions achieved in under 0.7 seconds for 400 patients/beds.
    
    \item The \textbf{University Course Scheduling} problem was proven NP-Complete via reduction from Graph Coloring. Greedy approximations (Welsh-Powell and DSatur) demonstrated quadratic complexity and practical approximation ratios far better than theoretical bounds.
\end{enumerate}

These results illustrate the power of polynomial-time reductions: recognizing problem structure enables either optimal polynomial-time solutions or effective approximations when optimality is intractable.

\section*{Declarations}
\noindent\textbf{Funding:} The authors received no external funding for this work.

\noindent\textbf{Conflicts of Interest:} The authors declare no competing interests.

\noindent\textbf{Author Contributions:} Anay Sinhal: conceptualization, methodology, software, validation, experiments, and writing (original draft). Arpana Sinhal: supervision, conceptualization, and writing (review \& editing). Amit Sinhal: formal analysis, proof development, and writing (review \& editing). Amit Hirawat: software support, literature support, and writing (review \& editing). All authors reviewed and approved the final manuscript.

\noindent\textbf{Ethics Statement/Approval:} Not applicable. This study does not involve human participants, human data, or animals.

\noindent\textbf{Informed Consent:} Not applicable.

\noindent\textbf{Data Availability:} No proprietary datasets were used. The code used to generate synthetic instances and reproduce the experiments is available at \url{https://github.com/anayy09/Algorithmic-Reductions}.

\noindent\textbf{Data Sharing:} The repository includes scripts for generating synthetic instances and reproducing reported timing results. Additional materials may be provided by the authors upon reasonable request.

\appendix

\subsection{Network Flow: Edmonds-Karp Algorithm}
\begin{lstlisting}[caption=Ford-Fulkerson with BFS (Edmonds-Karp)]
class FordFulkerson:
    def __init__(self, n):
        self.graph = defaultdict(lambda: defaultdict(int))
    
    def add_edge(self, u, v, capacity):
        self.graph[u][v] += capacity
    
    def bfs(self, source, sink, parent):
        visited = set([source])
        queue = deque([source])
        while queue:
            u = queue.popleft()
            for v in self.graph[u]:
                if v not in visited and self.graph[u][v] > 0:
                    visited.add(v)
                    parent[v] = u
                    if v == sink:
                        return True
                    queue.append(v)
        return False
    
    def max_flow(self, source, sink):
        parent = {}
        max_flow_value = 0
        while self.bfs(source, sink, parent):
            path_flow = float('inf')
            s = sink
            while s != source:
                path_flow = min(path_flow, self.graph[parent[s]][s])
                s = parent[s]
            v = sink
            while v != source:
                u = parent[v]
                self.graph[u][v] -= path_flow
                self.graph[v][u] += path_flow
                v = parent[v]
            max_flow_value += path_flow
            parent = {}
        return max_flow_value
\end{lstlisting}

\subsection{Greedy Graph Coloring: Welsh-Powell}
\begin{lstlisting}[caption=Welsh-Powell Greedy Coloring]
class GreedyCourseScheduler:
    def __init__(self, graph):
        self.graph = graph
        self.colors = {}
    
    def solve(self):
        n = self.graph.num_courses
        # Sort by degree descending (Welsh-Powell heuristic)
        courses_by_degree = sorted(
            range(n), 
            key=lambda x: self.graph.get_degree(x),
            reverse=True)
        
        for course in courses_by_degree:
            neighbor_colors = set()
            for neighbor in self.graph.get_neighbors(course):
                if neighbor in self.colors:
                    neighbor_colors.add(self.colors[neighbor])
            # Assign smallest available color
            color = 0
            while color in neighbor_colors:
                color += 1
            self.colors[course] = color
        
        return max(self.colors.values()) + 1
\end{lstlisting}

\end{document}